\newtheorem{theorem}{Theorem}[section]
\newtheorem{conjecture}[theorem]{Conjecture}
\newtheorem{proposition}[theorem]{Proposition}
\newtheorem{corollary}[theorem]{Corollary}
\newtheorem{definition}[theorem]{Definition}
\newtheorem*{remarkx}{Remark}
\DeclareMathOperator{\lcm}{lcm}
\newcommand{\bbN}{\mathbb{N}}
\newcommand{\bbNp}{\mathbb{N}^*}
\newcommand{\ef}{\mathbb{F}}
\newcommand{\efq}{\ef_q}
\newcommand{\efqm}{\ef_{q^m}}
\newcommand{\eftwo}{\ef_2}
\newcommand{\eftwomm}[1]{\ef_{2^{#1}}}
\newcommand{\eftwom}{\eftwomm{m}}
\newcommand{\bs}{\boldsymbol}
\newcommand{\bch}{\mathrm{BCH}}
\newcommand{\deltading}{{2^{m-2}+1}}
\newcommand{\dbding}{{\frac{2^m+1}{3}}}
\newcommand{\Fcyc}{F^{(\mathrm{cyc})}}
\newcommand{\fcyc}{f^{(\mathrm{cyc})}}
\title{The Generating Idempotent Is a Minimum-Weight Codeword for Some
Binary BCH Codes}
\author{ 
\IEEEauthorblockN{Yaron Shany and Amit Berman}
\thanks{Yaron Shany and Amit Berman are with Samsung  Semiconductor
Israel R\&D Center, 146 Derech Menachem Begin St., Tel Aviv 6492103,
Israel. Emails: \{yaron.shany, amit.berman\}@samsung.com}
}
\begin{document}

\maketitle

\begin{abstract}
In a paper from 2015, Ding {\it et al.} (IEEE Trans.~IT, May 2015)
conjectured that for odd $m$, the minimum distance of the binary BCH
code of length $2^m-1$ and designed distance $2^{m-2}+1$ is equal to
the Bose distance calculated in the same paper. 

In this paper, we prove the conjecture. In fact, we prove a stronger
result suggested by Ding {\it et al.}: the weight of the
generating idempotent is equal to the Bose 
distance for both odd and even $m$. Our main tools are some new
properties of the so-called fibbinary integers, in particular, the
splitting field of related polynomials, and the relation of these
polynomials to the idempotent of the BCH code.
\end{abstract}

\section{Introduction}
BCH codes are widely used in storage systems and various communication
systems, including optical communications, digital video broadcasting,
and more. Despite being introduced more than sixty years ago \cite{Hocq},
\cite{Bose}, the exact minimum distance of BCH codes is known only for a
small number of cases; we refer the reader to the recent
survey \cite{DingLi24} for an account of the current known results. 
The main objective of the current paper is to resolve the minimum
distance for a case that was left as a conjecture in \cite{DDZ15}.

More specifically, for positive integers $m,d$ with $d$ odd, let
$\bch(m,d)$ be the 
(primitive, narrow-sense) binary BCH code of length $2^m-1$ and
designed distance $d$, that is, the binary cyclic code of 
length $2^m-1$ having zeros exactly
$\alpha,\alpha^2,\ldots,\alpha^{d-1}$ and their conjugates, where 
$\alpha$ is a primitive element of $\eftwom$.\footnote{For a prime
power $q$, we let $\efq$ be the finite field of $q$ elements.}
Equivalently, $\bch(m,d)$ is the binary cyclic code with generator
polynomial
$\lcm\big\{M_{\alpha^i}(X)|i\in\{1,\ldots,d-1\}\big\}$, 
where for $\beta\in \eftwom$, $M_{\beta}(X)$ is the minimal polynomial
of $\beta$ over $\eftwo$. It is well known by the {\it BCH bound} that
the minimum distance\footnote{Throughout, ``distance'' and ``weight'' 
stand for Hamming distance and Hamming weight, respectively.} of
$\bch(m,d)$ is at least $d$.

It may happen that $\bch(m,d_1)=\bch(m,d_2)$ for $d_2> d_1$:
starting with $d_1$ and including all conjugates of
$\alpha,\alpha^2\ldots,\alpha^{d_1-1}$ as zeros, it may turn out that
the longer progression $\alpha,\alpha^2,\ldots,\alpha^{d_2-1}$ appears
in the set of zeros. Starting with a {\it designed distance} of
$\delta$, the largest odd $d$ for which $\bch(m,\delta)=\bch(m,d)$ is
called the {\bf Bose distance}. So, the minimum distance of
$\bch(m,\delta)$ is \emph{at least} the Bose distance, but it is
possible that the minimum distance of $\bch(m,\delta)$ is larger than
the Bose distance. In general, determining the minimum distance of BCH
codes is a notoriously hard problem \cite{DingLi24}. 

In a paper from 2015 \cite{DDZ15}, Ding {\it et al.} found the Bose
distance $d_B$ of $\bch(m,\delta)$ for $\delta:=\deltading$, proved that for
even $m$ it is equal to the minimum distance, and posed the following
conjecture:

\begin{conjecture}[Ding {\it et al.}, Conjecture 1 of
\cite{DDZ15}]\label{conj:ding}  
Let $m$ be odd. Then the minimum distance $d$ of $\bch(m,\deltading)$
equals the Bose distance, that is,
$$
d=d_B=\dbding.
$$
\end{conjecture}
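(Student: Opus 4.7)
My plan is to prove the stronger statement indicated in the abstract: the generating idempotent $e(X)$ of $\bch(m, \deltading)$ has Hamming weight exactly $d_B = \dbding$. This immediately yields \Cref{conj:ding}, since the BCH bound together with the definition of the Bose distance gives $d \geq d_B$, while $e(X)$, being a codeword of $\bch(m, \deltading)$, forces $d \leq \wt(e)$. A pleasant feature of this route is that the same computation treats both parities of $m$ uniformly, so it simultaneously recovers Ding et al.'s earlier even-$m$ result.

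The first step is to pin down $e(X)$ combinatorially. For a binary cyclic code of length $n = 2^m - 1$ with defining set $Z \subseteq \bbZ/n\bbZ$, the generating idempotent is the unique $e \in \eftwo[X]/(X^n-1)$ with $e(\alpha^j) = 0$ for $j \in Z$ and $e(\alpha^j) = 1$ for $j \notin Z$; equivalently, writing $e(X) = \sum_{i \in S} X^i$, the support $S$ is a union of $2$-cyclotomic cosets mod $n$ and $\wt(e) = |S|$. Working out $Z$ explicitly from the designed distance $\deltading$ is routine. Following the abstract's hint, I would then parametrize and count the cyclotomic cosets contributing to $S$ by means of fibbinary integers -- nonnegative integers whose binary expansion contains no two consecutive $1$s -- exploiting the fact that such integers give a natural transversal for the cyclic-shift action on binary strings of length $m$ and are counted by Fibonacci numbers.

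With $S$ described in fibbinary terms, the remaining task is the counting identity $|S| = \dbding$. The abstract's reference to ``splitting fields of related polynomials'' suggests forming a polynomial $F(X) \in \eftwo[X]$ whose roots encode membership in $S$, so that $|S|$ can be read off from the sizes of the $\gal(\ol{\eftwo}/\eftwo)$-orbits of its roots, which are in turn governed by the splitting field of $F$. A Fibonacci-type recurrence in $m$ should then deliver the count. The main obstacle I anticipate is precisely this step: verifying that every cyclotomic coset appearing in $S$ has full size $m$ (so that $|S|$ is a clean sum of $m$-sized contributions), locating and handling any exceptional short orbits, and matching the fibbinary-indexed count against the closed form $(2^m+1)/3$ -- likely through induction on $m$ or a generating-function argument, while carefully tracking the mild parity-dependent boundary behavior that distinguishes odd from even $m$.
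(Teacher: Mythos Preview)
Your proposal has the right target --- show $\wt(e)=d_B$ --- but there is a genuine gap in how you plan to compute $\wt(e)$. You write $e(X)=\sum_{i\in S}X^i$ and propose to ``parametrize and count the cyclotomic cosets contributing to $S$ by means of fibbinary integers.'' But fibbinary integers do not describe $S$; they describe the support of the \emph{Fourier transform} $\hat{e}$. Concretely, the set with a clean combinatorial description is the set $N$ of non-zeros of the code, so that $\hat{e}(X)=\sum_{i\in N}X^i$; after the bit-flip $i\mapsto n-i$, the set $N\smallsetminus\{0\}$ becomes $\Fcyc_m$. The support $S$ of $e$ itself is recovered only through $e_i=\hat{e}(\alpha^{-i})$, and there is no visible combinatorial criterion for which $i$ make this equal to $1$. (Incidentally, fibbinary integers are not a transversal for cyclic shifts: already for $m=3$ the set $\{0,1,2,4,5\}$ meets the orbit $\{1,2,4\}$ three times.)

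The paper exploits the Fourier duality in the opposite direction: rather than describing $S$, it uses the inversion formula to get $\wt(e)=|\{\beta\in\eftwom^*:\hat{e}(\beta)\neq 0\}|$, which equals the number of roots of $\fcyc_m(X):=\sum_{i\in\Fcyc_m}X^i$ in $\eftwom^*$. The splitting-field argument is then applied not to a polynomial ``whose roots encode membership in $S$,'' but to a \emph{factor} of $\fcyc_m$: one proves the algebraic identity $\fcyc_m=X\,u_m\,f_{m-3}^4$ with $\deg(u_m)=d_B$, shows that the fibbinary polynomial $f_{m-3}$ splits in $\eftwomm{m-1}$, and uses $\eftwomm{m-1}\cap\eftwom=\eftwo$ (together with the check that $f_{m-3}(1)=0\Rightarrow u_m(1)=0$) to conclude that $f_{m-3}$ contributes no roots in $\eftwom^*$ beyond those already in $u_m$. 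This bounds the root count from above by $d_B$; the lower bound $\wt(e)\geq d_B$ finishes. No direct enumeration of $|S|$ or Fibonacci-count matching against $(2^m+1)/3$ takes place.
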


It is also mentioned in \cite{DDZ15} that to prove the conjecture, it is
sufficient to find some codeword whose weight equals $d_B$, and that
checking some small values of $m$, it seems that the generating
idempotent of the code has this property. However, the authors of
\cite{DDZ15} state that they were not able to prove this in general,
and invite the reader to attack this open problem
\cite[p.~2355]{DDZ15}.\footnote{It is also stated in \cite{DDZ15} that
\cite{AS94} may be useful for this purpose, but eventually we have
used a different approach.} 

The main result of the current paper is the following theorem. 

\begin{theorem}\label{thm:main}
For all integer $m\geq 4$ (both odd and even), the weight $w$ of the
generating idempotent of $\bch(m,\deltading)$ is equal to the Bose
distance found in \cite{DDZ15}, that is, 
$$
w=d_B=\begin{cases}
\frac{1}{3}(2^m+1) & \text{if $m$ is odd}\\
\frac{1}{3}(2^m-1) & \text{if $m$ is even}.
\end{cases}
$$
\end{theorem}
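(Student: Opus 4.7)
The plan is to compute the generating idempotent $e(X)\in\eftwo[X]/(X^n-1)$ (with $n=2^m-1$) explicitly enough to count its support. Let $T\subseteq\bbZ/n\bbZ$ be the defining set of $\bch(m,\deltading)$, i.e.\ the union of $2$-cyclotomic cosets meeting $\{1,\ldots,2^{m-2}\}$. Representing integers in $\bbZ/n\bbZ$ as $m$-bit binary strings, a direct check shows that $i\in T$ iff the cyclic $m$-bit binary word of $i$ contains two consecutive zeros. The bitwise complement $i\mapsto n-i$ is then an involution on $\{1,\ldots,n-1\}$ that identifies $(\bbZ/n\bbZ\setminus T)\setminus\{0\}$ with $F\setminus\{0\}$, where $F$ denotes the set of \emph{cyclic fibbinary integers} (those whose $m$-bit cyclic binary word has no two consecutive ones).

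Next, I would apply the inverse Fourier transform over $\eftwom$. Since $n$ is odd, $n\equiv 1$ in $\eftwo$, so the standard formula gives $e_s=\sum_{i\notin T}\alpha^{-is}$. Separating $i=0$ and substituting $i=n-j$ via the bijection above yields
\[
e_s=1+\sum_{j\in F\setminus\{0\}}\alpha^{js}=1+Q(\alpha^s),\qquad Q(X):=\sum_{j\in F\setminus\{0\}}X^j.
\]
Because $F$ is closed under doubling modulo $n$ (cyclic shifts of fibbinary words remain fibbinary), $Q$ is itself an idempotent in $\eftwo[X]/(X^n-1)$; hence $Q(\alpha^s)\in\{0,1\}$ for every $s$ and is constant on each 2-cyclotomic coset. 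Consequently $e_s=1$ iff $Q(\alpha^s)=0$, so the weight of $e$ equals $|\{s\in\bbZ/n\bbZ:Q(\alpha^s)=0\}|=\deg\gcd(Q(X),X^n-1)$.

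Third, I would derive a compact form for $Q$ by splitting cyclic fibbinary words of length $m$ according to their last bit. With $F_k(X):=\sum_{\substack{0\le j<2^k\\ j\text{ fibbinary}}}X^j$ obeying $F_k=F_{k-1}+X^{2^{k-1}}F_{k-2}$ ($F_0=1$, $F_1=1+X$), cyclic fibbinary words with last bit $0$ are in bijection with arbitrary linear fibbinary words of length $m-1$, while those with last bit $1$ force both the adjacent and top bits to be $0$, leaving a linear fibbinary word of length $m-3$. Using $A(X)^{2^k}=A(X^{2^k})$ in characteristic $2$, this gives
\[
Q(X)=1+F_{m-1}(X)^2+X\cdot F_{m-3}(X)^4,
\]
so that $Q(\alpha^s)=0$ is equivalent, by extracting a square root in the perfect field $\eftwom$, to
\[
1+F_{m-1}(\alpha^s)=\alpha^{s\cdot 2^{m-1}}\cdot F_{m-3}(\alpha^s)^2.
\]
This ties the root set of $Q$ inside the $n$-th roots of unity to the splitting behaviour of $F_{m-1}$ and $F_{m-3}$.

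The remaining step, and the main obstacle, is to determine precisely which 2-cyclotomic cosets $C\subseteq\bbZ/n\bbZ$ lie in the zero set of $Q$ and to show that the sum $\sum_C|C|$ equals $(2^m+(-1)^{m+1})/3$. Here I would analyse the splitting fields of the polynomials $F_k$ inside $\eftwom$ and the orbits of the doubling map on cyclic fibbinary cosets, turning the square-root identity above into a uniform count on cosets. The divisibility $3\mid 2^m-(-1)^m$ must play a structural role, matching the Bose distance on the nose for both parities of $m$. Once this count is established, Theorem~\ref{thm:main}---and hence Conjecture~\ref{conj:ding}---follows immediately.
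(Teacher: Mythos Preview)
Your setup is essentially identical to the paper's: your polynomial $Q$ is exactly the cyclic fibbinary polynomial $\fcyc_m$, and your decomposition $Q=1+F_{m-1}^2+XF_{m-3}^4$ is the paper's Proposition~\ref{prop:cycto}. However, your proposal is openly incomplete---you flag the actual counting of roots in $\eftwom^*$ as ``the main obstacle'' and do not resolve it. The square-root identity you extract is correct but does not by itself yield the count; the paper supplies two concrete ingredients you are missing.

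First, there is a multiplicative identity for the linear fibbinary polynomials that you have not found: from the recursion $F_k=F_{k-1}^2+XF_{k-2}^4$ one proves by induction that $1+F_k=XF_{k-1}F_{k-2}^2$. Applying this with $k=m-1$ turns your additive expression into a genuine factorisation,
\[
Q \;=\; XF_{m-3}^4 + (1+F_{m-1})^2 \;=\; XF_{m-3}^4 + X^2F_{m-2}^2F_{m-3}^4 \;=\; X\,(1+XF_{m-2}^2)\,F_{m-3}^4,
\]
so $Q=X\,u_m\,F_{m-3}^4$ with $u_m:=1+XF_{m-2}^2$ of degree exactly $d_B$. Second, the paper pins down the splitting field of $F_k$ as $\eftwomm{k+2}$ (via an explicit identity showing $F_k\mid X^{2^{k+2}-1}+1$, together with a degree argument). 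Hence every root of $F_{m-3}$ lies in $\eftwomm{m-1}$, and since $\eftwomm{m-1}\cap\eftwom=\eftwo$, the factor $F_{m-3}^4$ can contribute at most the root $1$ in $\eftwom^*$; a short parity check on the Fibonacci numbers $|F_k|$ shows that whenever $F_{m-3}(1)=0$ one already has $u_m(1)=0$. Thus the number of roots of $Q$ in $\eftwom^*$ is at most $\deg u_m=d_B$, while being at least $d_B$ because it is the weight of a codeword. Your outline reaches the doorstep of this argument but does not cross it; without the factorisation identity and the splitting-field determination, the ``uniform count on cosets'' you allude to has no mechanism.
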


In particular, Theorem \ref{thm:main} proves Conjecture
\ref{conj:ding}. Moreover, although for the case of even $m$ the
minimum distance was settled in \cite{DDZ15}, it was not previously
known that for even $m$, the generating idempotent attains the minimum
weight.

\subsection{Proof technique}
Our main tools are the relation of the relevant idempotents to
``fibbinary polynomials'' (see Definition \ref{def:fibpol} ahead), as
well as some new properties of these polynomials. In particular, we
specify the splitting fields of the fibbinary polynomials (Proposition
\ref{prop:split}), which seems interesting for its own sake. 

The proof proceeds along the following steps:

\begin{itemize}

\item Identify a polynomial $g(X)\in\eftwo[X]$ 
whose number of roots in  $\eftwom^*$ equals the weight of the
generating idempotent. Hence, the number of roots is at least $d_B$.

\item Show that $g$ factors as $g=Xg_1g_2$, where $\deg(g_1)=d_B$
and $g_2$ is a power of a fibbinary polynomial that splits in
$\eftwomm{m-1}$. 

\item As $\eftwomm{m-1}^*\cap \eftwom^*=\eftwo^*$, $g_2$ may
contribute at most the root $1$ on top of the roots of $g_1$ in
$\eftwom^*$. However, we show that when $g_2(1)=0$, also $g_1(1)=0$,
so that $g_2$ cannot contribute any additional root in
$\eftwom^*$. Hence, the number of roots of $g$ in $\eftwom^*$ is at
most $\deg(g_1)= d_B$. 

\item So, the weight of the generating idempotent must equal $d_B$.

\end{itemize}

\subsection{Paper outline}
Section \ref{sec:prelim} includes some required preliminaries. The
proof of Theorem \ref{thm:main} then appears in Section
\ref{sec:proof} following the above steps: In Section
\ref{sec:nroots}, it is shown that the weight of the generating idempotent
is equal to the number of roots of a ``cyclic fibbinary polynomial''
(Definition \ref{def:fibpol}) in $\eftwom^*$.  Then, in Section
\ref{sec:fact}, it is shown that the cyclic fibbinary polynomial
factors as the product $Xg_1g_2$ described above, and in Section
\ref{sec:split}, the splitting field of $g_2$ is identified. Finally,
all the intermediate results are wrapped up in Section \ref{sec:end}
to complete the proof. We conclude the paper in Section \ref{sec:conc}
with some concluding remarks and open questions for further
research. 

\section{Preliminaries}\label{sec:prelim}
This section includes some notation and definitions that will be used
throughout the paper.

For a positive integer $m$ and for an integer
$i\in\{0,1,\ldots,2^m-1\}$, the $m$-bit {\bf binary representation} of
$i$ is the vector $(i_{m-1},i_{m-2},\ldots,i_0)\in\{0,1\}^m$ such that 
$i=\sum_{j=0}^{m-1} i_j 2^j$. To simplify notation, we write the binary
representation in a string form $i_{m-1}i_{m-2}\cdots i_0$. When
$m$ is fixed, the {\bf most significant bit (MSB)} of $i$ is $i_{m-1}$,
while the {\bf least significant bit (LSB)} of $i$ is $i_0$.

\subsection{Cyclic codes}
Let us first recall some basic definitions related to cyclic
codes. For more details, see, e.g.,
\cite[Chs.~7, 8]{MS} or \cite[Ch.~8]{Roth}. A cyclic code over $\efq$
(for $q$ a prime power) is an $\efq$-linear code that is invariant
under cyclic shifts. Equivalently, a cyclic code of length $n$ is an
ideal in $\efq[X]/(X^n-1)$. By correspondence of ideals of $\efq[X]$
and those of $\efq[X]/(X^n-1)$, any ideal of the latter ring (that is,
any cyclic code) is the image of an ideal of the former ring that
includes $(X^n-1)$, that is, the image of $(g(X))$ for some $g(X)$ with
$g(X)|(X^n-1)$. The {\bf generator polynomial} of a cyclic code
$C\subseteq \efq[X]/(X^n-1)$ is the unique monic $g(X)$ that generates
the pre-image of $C$ in $\efq[X]$ as an ideal.

We will assume that $n$ is coprime to $q$, so that $X^n-1$ is
separable. The {\bf zeros} of a cyclic code $C\subseteq
\efq[X]/(X^n-1)$ are the roots of its generator polynomial,
which all lie in the group of $n$-th roots of unity 
in the splitting field $\efqm$ of $X^n-1$. The set of zeros defines $g$, and
hence determines the code uniquely. We note also that as $g\in
\efq[X]$, its set of roots (i.e., the zeros of the generated cyclic
code) must be invariant under the Galois group of $\efqm/\efq$. Hence
if $\beta$ is a root, so are all its conjugates
$\beta,\beta^q,\beta^{q^2},\ldots$. 

From this point on, we will only consider the binary case, where $q=2$,
and, furthermore, $n:=2^m-1$ for some positive integer $m$. Hence the
zeros of the relevant codes are all in $\eftwom^*$. As already
mentioned in the introduction, we are concerned with primitive,
narrow-sense, binary BCH codes of length $2^m-1$, which are defined as
the binary cyclic codes whose zero set is comprised of
$\alpha,\alpha^2,\ldots,\alpha^{d-1}$ and their conjugates, where
$\alpha\in \eftwom$ is a primitive element, and $d$ is an odd 
integer $\geq 3$. 

\subsection{Cyclotomic cosets}
The orbit of an element $\beta\in \eftwom^*$ under the action of the
Galois group of $\eftwom/\eftwo$ is $\{\beta,\beta^2,\ldots,
\beta^{2^{m'-1}}\}$, where $m'|m$ is the degree of the minimal
polynomial of $\beta$ over $\eftwo$ (equivalently, the extension
degree $[\eftwo(\beta):\eftwo]$). Fixing a primitive element $\alpha$
and writing $\beta=\alpha^i$ for some $i\in\{0,\ldots,n-1\}$, a
convenient way to represent this orbit  
is to record the ``base-$\alpha$ logarithms''
$\{i,2\cdot i\bmod n,2^2\cdot i\bmod n,\ldots,2^{m'-1}\cdot i\bmod
n\}$, where $2^{m'}\cdot i\bmod n=i$, and $m'$ is the smallest positive
integer with this property. Such a set is called the {\bf cyclotomic
coset} of $i$ modulo $n$.

\subsection{Fourier transforms on $\eftwom^*$}
Fix $m\in \bbNp$ and a primitive element $\alpha\in \eftwom$, and
recall that $n=2^m-1$. For a polynomial $f\in \eftwom[X]$ with
$\deg(f)\leq n-1$, we let  the {\bf Fourier transform} of $f$ be the
polynomial\footnote{More precisely, the Fourier transform is the
mapping $f\mapsto\hat{f}$.} 
$$
\hat{f}:=\sum_{i=0}^{n-1} f(\alpha^i)X^i.
$$
It is well-known that $f$ is determined from $\hat{f}$ as
\begin{equation}\label{eq:invF}
f=\sum_{i=0}^{n-1} \hat{f}(\alpha^{-i})X^i
\end{equation}
(this can be shown, e.g., by verifying that the inverse of the
Vandermonde matrix $\{\alpha^{ij}\}_{0\leq i,j\leq n-1}$ is
$\{\alpha^{-ij}\}_{0\leq i,j\leq n-1}$). 
In fact, the Fourier transform, as well as the inversion formula
\eqref{eq:invF}, are well-defined on $\eftwom[X]/(X^n-1)$, since
$\alpha^n=1$. 

\subsection{The generating idempotent of a binary cyclic code}
We continue to restrict attention to the case $n=2^m-1$ of interest to
the paper. It is well known \cite[Ch.~8]{MS} that any 
binary cyclic code $C\subseteq \eftwo[X]/(X^n-1)$ has a unique
codeword\footnote{For simplicity, we identify polynomials of degree
$<n$ with their image in $\eftwo[X]/(X^n-1)$.} $e(X)$ such that
$e^2=e$ in $\eftwo[X]/(X^n-1)$ and also $e$ generates $C$ as an ideal
in $\eftwo[X]/(X^n-1)$. Polynomials $e(X)$ with $e^2=e$ in
$\eftwo[X]/(X^n-1)$ are called {\bf idempotents}, and an idempotent
$e$ that generates a binary cyclic code $C$, as above, is called the
{\bf generating idempotent} of $C$. 

It is easily verified that any idempotent evaluates to either $0$ or
$1$ on $\eftwom^*$, and that the generating idempotent of a cyclic
code $C$ evaluates to $1$ exactly on the set of non-zeros of $C$ in
$\eftwom^*$. Hence, if we fix a primitive element $\alpha\in \eftwom$
and let $N\subseteq \{0,\ldots, n-1\}$ be such that the set of non-zeros of $C$
is $\{\alpha^i|i\in N\}$, then the Fourier transform $\hat{e}$ of the
generating idempotent $e$ of $C$ is given by $\hat{e}=\sum_{i\in N} X^i$.

\section{Proof of Theorem \ref{thm:main}}\label{sec:proof}

Throughout this section, $m$ is a positive integer,
$n:=2^m-1$, and $\alpha$ is a primitive element of $\eftwom$. 

\subsection{Finding a polynomial whose number of roots equals the
weight of the idempotent}\label{sec:nroots}
By definition, the generating idempotent of a cyclic code takes
the value $0$ for any zero of the code, and the value $1$ for any
non-zero of the code. In what follows, for a polynomial $a(X)\in
\eftwom[X]$ of degree up to $n-1$, we write 
$a(X)=\sum_{i=0}^{n-1} a_i X^i$, and let $\bs{a}:=(a_0,\ldots,
a_{n-1})$. 

To continue, it will be useful to define the following sets of
integers. We note that the integers defined in the first part of the
definition are the so-called {\it fibbinary integers} \cite{SFB}.

\begin{definition}
{\rm
\begin{enumerate}
\item Let $F_m$ be the set of integers $i\in \{0,\ldots,n-1\}$ such that the
binary representation of $i$ has no two consecutive
ones. 

\item Let $\Fcyc_m$ be the set of integers $i\in \{1,\ldots,n-1\}$
such that the $m$-bit binary representation of $i$ has no two {\bf
cyclically} consecutive ones.

\end{enumerate}
}
\end{definition}
For example, $F_3=\{0,1,2,4,5\}$, while $\Fcyc_3=\{1,2,4\}$. 

\begin{proposition}\label{prop:max}
It holds that 
\begin{equation}\label{eq:maxfm}
\max{F_m}=\begin{cases}
\frac{2}{3}(2^m+1)-1 & \text{if $m$ is odd}\\
\frac{2}{3}(2^m-1) & \text{if $m$ is even}
\end{cases}
\end{equation}
and
\begin{equation}\label{eq:maxcfm}
\max{\Fcyc_m}=\begin{cases}
\frac{2}{3}(2^m+1)-2 & \text{if $m$ is odd}\\
\frac{2}{3}(2^m-1) & \text{if $m$ is even}.
\end{cases}
\end{equation}
\end{proposition}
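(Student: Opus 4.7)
The plan is to establish a simple recurrence for $M_m := \max F_m$, solve it in closed form, and then obtain $\max\Fcyc_m$ by a case split on the most significant bit. The guiding intuition is that the $F_m$-maximizer is obtained greedily by placing ones in the highest positions compatible with the no-two-consecutive-ones constraint, i.e., as the alternating pattern $1010\cdots$ read from the MSB downward.

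To establish the recurrence, I would argue $M_m=2^{m-1}+M_{m-2}$ for $m\ge 2$ (with base cases $M_0=0$, $M_1=1$) as follows. Since $2^{m-1}\in F_m$ we have $M_m\ge 2^{m-1}$, so any maximizer $i$ must satisfy $i_{m-1}=1$; the fibbinary condition then forces $i_{m-2}=0$, and the remaining bits $(i_{m-3},\ldots,i_0)$ must themselves form a maximizer of $F_{m-2}$. Solving the recurrence: for odd $m=2k+1$, $M_m=\sum_{j=0}^{k}4^{j}=(4^{k+1}-1)/3=(2^{m+1}-1)/3=\tfrac{2}{3}(2^m+1)-1$; for even $m=2k$, $M_m=\sum_{j=1}^{k}2^{2j-1}=2(4^{k}-1)/3=\tfrac{2}{3}(2^m-1)$. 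This yields \eqref{eq:maxfm}.

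For $\Fcyc_m$ the analysis depends on the parity of $m$. When $m$ is even, the $F_m$-maximizer is the pattern $10101\cdots10$ with $i_{m-1}=1$ and $i_0=0$, which already respects the cyclic constraint, giving $\max\Fcyc_m=M_m=\tfrac{2}{3}(2^m-1)$. When $m$ is odd, the $F_m$-maximizer ends in a $1$ and is excluded from $\Fcyc_m$, so I would case split on the MSB: for $i\in\Fcyc_m$ with $i_{m-1}=0$ one has $i\le M_{m-1}$, while $i_{m-1}=1$ forces $i_{m-2}=0$ (fibbinary) and $i_0=0$ (cyclic), so the bits $(i_{m-3},\ldots,i_1)$ form an element of $F_{m-3}$ (shifted by one position), yielding $i\le 2^{m-1}+2M_{m-3}$. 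Using the closed form above (both $m-1$ and $m-3$ are even), one checks that the second bound strictly dominates and equals $\tfrac{2}{3}(2^m+1)-2$, giving \eqref{eq:maxcfm}. There is no substantial obstacle here — the whole proposition reduces to an elementary greedy/inductive argument on the binary representation, with the only minor subtlety being the MSB case split in the odd cyclic case.
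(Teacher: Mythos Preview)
Your proof is correct and follows essentially the same approach as the paper: both identify the maximizer as the greedy alternating pattern $1010\cdots$ read from the MSB. The paper's proof merely asserts the patterns and leaves the arithmetic implicit, whereas you supply a rigorous recurrence argument and the explicit MSB case split for the odd cyclic case; this is added detail, not a genuinely different route.
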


\begin{proof}
When $m$ is odd, the $m$-bit binary representation of the largest
integer in $F_m$ is $1010\cdots101$, while that of the largest integer
in $\Fcyc_m$ is $1010\cdots100$. When $m$ is even, the $m$-bit binary
representation of the largest integer in both $F_m$ and $\Fcyc_m$ is
$1010\cdots10$. 
\end{proof}

It will also be useful to define the following polynomials related to
$F_m$ and $\Fcyc_m$.

\begin{definition}\label{def:fibpol}
{\rm
Let $f_m:=\sum_{i\in F_m} X^i\in \eftwo[X]$ and
$\fcyc_m(X):=\sum_{i\in \Fcyc_m} X^i\in \eftwo[X]$. We refer to 
$f_m$ as the $m$-th {\bf fibbinary polynomial}, and to $\fcyc_m$ as
the $m$-th {\bf cyclic fibbinary polynomial}.
}
\end{definition}
Note that $\deg(f_m)=\max{F_m}$ and $\deg(\fcyc_m)=\max{\Fcyc_m}$ are
given by \eqref{eq:maxfm} and \eqref{eq:maxcfm}, respectively.

We will later explore the relations between the fibbinary polynomials
and the cyclic fibbinary polynomials, as well as  useful properties of the
fibbinary polynomials. In particular, we will pin down the splitting field
of the $f_m$. For now, we only relate the cyclic fibbinary polynomials
to the idempotent of $\bch(m,\deltading)$.

From this point on, ``a BCH code'' means a primitive, narrow-sense,
binary BCH code. 

\begin{proposition}\label{prop:wtroots}
Let $e(X)$ be the generating idempotent of $\bch(m,\deltading)$. Then
the weight of $\bs{e}$ equals the number of roots of $\fcyc_m(X)$ in
$\eftwom^*$.  
\end{proposition}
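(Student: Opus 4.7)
The plan is to compute the discrete Fourier transform $\hat e$ of the idempotent explicitly, rewrite it as $1+\fcyc_m(X^{-1})$, and then read off the weight from the inversion formula. Since the generating idempotent takes value $1$ on the non-zeros and $0$ on the zeros of the code, we have $\hat e(X)=\sum_{i\in N}X^i$, where $N\subseteq\{0,1,\ldots,n-1\}$ indexes the non-zeros. The inversion formula~\eqref{eq:invF} gives $e_j=\hat e(\alpha^{-j})$, and as $j$ ranges over $\{0,\ldots,n-1\}$ the point $\gamma=\alpha^{-j}$ traverses all of $\eftwom^*$; in particular each value $\hat e(\gamma)$ equals some $e_j\in\{0,1\}$, so
\[
\wt(\bs e)=\#\{\gamma\in\eftwom^*:\hat e(\gamma)=1\}.
\]

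The heart of the argument is to pin down $N$ in combinatorial terms. Multiplication by $2$ modulo $n=2^m-1$ acts as a cyclic shift of the $m$-bit binary representation, so $i\in\{1,\ldots,n-1\}$ is a zero of $\bch(m,\deltading)$ iff some cyclic shift of $i$ represents an integer in $\{1,\ldots,2^{m-2}\}$. A nonzero $m$-bit number is at most $2^{m-2}$ iff either its two leading bits are both zero, or it equals $2^{m-2}$ exactly. The first option, applied across all cyclic shifts, is equivalent to the existence of two cyclically consecutive zero bits in $i$; and for $m\geq 3$ the second option is absorbed into the first, since a power of $2$ has $m-1\geq 2$ zero bits and hence (by a pigeonhole on the $m$ cyclic positions) must contain two cyclically adjacent zeros. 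Therefore $i\in N\setminus\{0\}$ iff $i$ has no two cyclically consecutive zero bits, or equivalently, setting $j=n-i$ and complementing bits, iff $j\in\Fcyc_m$. Hence $N=\{0\}\cup(n-\Fcyc_m)$.

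Substituting and using $\gamma^n=1$ for $\gamma\in\eftwom^*$ yields
\[
\hat e(\gamma)=1+\sum_{j\in\Fcyc_m}\gamma^{\,n-j}=1+\sum_{j\in\Fcyc_m}\gamma^{-j}=1+\fcyc_m(\gamma^{-1}).
\]
Since $\hat e(\gamma)\in\{0,1\}$, the condition $\hat e(\gamma)=1$ is the same as $\fcyc_m(\gamma^{-1})=0$, and composing with the bijection $\gamma\mapsto\gamma^{-1}$ on $\eftwom^*$ gives
\[
\wt(\bs e)=\#\{\delta\in\eftwom^*:\fcyc_m(\delta)=0\},
\]
as required. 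The one delicate step is the combinatorial analysis of $N$ in paragraph two, in particular the check that the ``$k=2^{m-2}$'' corner case is already ruled out by the no-two-cyclically-adjacent-zeros condition; the rest is routine manipulation of the Fourier transform on $\eftwom^*$.
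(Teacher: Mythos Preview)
Your proof is correct and follows essentially the same approach as the paper: identify the non-zero set $N$ as $\{0\}$ together with those $i$ having no two cyclically consecutive zero bits, use the Fourier inversion formula to turn the weight of $\bs e$ into a root count for $\hat e$, and then pass from the ``no two adjacent zeros'' set to $\Fcyc_m$ via the bit-complement $i\mapsto n-i$ combined with the bijection $\gamma\mapsto\gamma^{-1}$ on $\eftwom^*$. The only cosmetic difference is that the paper works with the range $\{1,\ldots,2^{m-2}-1\}$ (using that $\alpha^{\delta-1}$ is already a conjugate), whereas you use $\{1,\ldots,2^{m-2}\}$ and then explicitly absorb the $2^{m-2}$ corner case; both lead to the same description of $N$.
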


\begin{proof}
It is well-known that for a BCH code of designed distance $\delta$
(for odd $\delta$), all representatives of the conjugacy classes of the zeros
of the code are contained in $\alpha,\alpha^2,\ldots,
\alpha^{\delta-2}$ (as
$\alpha^{\delta-1}=\big(\alpha^{(\delta-1)/2}\big)^2$). So, the
zeros of $\bch(m,\deltading)$ are $\alpha^i$, where $i$ runs on the
cyclotomic cosets of integers $\geq 1$ that are at most
$2^{m-2}-1$. 

Note that the $m$-bit binary representation of
$2^{m-2}-1$ is $0011\cdots1$. This implies
that the set of zeros of $\bch(m,\deltading)$ is exactly
$\{\alpha^i|i\in J_m\}$, where $J_m$ is the set of integers $j$ in
$\{1,\ldots,n-1\}$ such that the binary representation of $j$ has two
cyclically consecutive zeros.\footnote{
If there are two cyclically consecutive zeros, they can be made the
two MSBs by a cyclic shift, resulting in a number $\leq 2^{m-2}-1$. On
the other hand, if there are no two cyclically consecutive zeros, then
for any cyclic shift, at least one of the two MSBs is $1$, so that the
number is $>2^{m-2}-1$.} 
In other words, the set of 
{\bf non}-zeros of the code consists of $\alpha^i$ for $i$ in
$\{0\}\cup \bar{J}_m$, where $\bar{J}_m$ consists of the integers $j$
in $\{1,\ldots,n-1\}$ whose 
binary representation does not have two cyclically consecutive
zeros. Hence  
$$
\hat{e}(X)=1+\sum_{i\in \bar{J}_m} X^i.
$$
The second summand is invariant under squaring modulo $X^n-1$
(as $\bar{J}_m$ is clearly the union of complete cyclotomic cosets), and
therefore evaluates to either $0$ or $1$ on $\eftwom$. 

By \eqref{eq:invF}, the weight of $\bs{e}$ is the number of non-zeros
of $\hat{e}(X)$ in $\eftwom^*$, which, by the above comment, is the
number of zeros of $v(X):=\sum_{i\in \bar{J}_m} X^i$ in
$\eftwom^*$. This number does not change if we replace the exponent
set $\bar{J}_m$ by $\{2^m-1-i|i\in \bar{J}_m\}$, as this change has
the effect of getting $v(\beta^{-1})$ when substituting $\beta\in
\eftwom^*$. But moving from $i$ to $2^m-1-i$ just flips
$0\leftrightarrow 1$ in the $m$-bit binary representation, and
therefore the last set is exactly $\Fcyc_m$, completing the proof.  
\end{proof}

\subsection{Recursions for $f_m(X),\fcyc_m(X)$, and a factorization
of $\fcyc_m(X)$}\label{sec:fact} 
By Proposition \ref{prop:wtroots}, it is of interest to find the
number of roots of $\fcyc_m$ in $\eftwom^*$. Toward this end, we will
study in this section some recurrence relations for the
$f_m(X),\fcyc_m(X)$, as well as their consequence for the required
number of roots.

In what follows, we will use freely the relation
$g(X^{2^i})=g(X)^{2^i}$ for $g(X)\in\eftwo[X]$ and $i\in \bbN$. 

\begin{proposition}\label{prop:cycto}
For $m\geq 4$, it holds that  
\begin{equation}\label{eq:fcyc}
\fcyc_m(X)=Xf_{m-3}(X^4)+1+f_{m-1}(X^2).
\end{equation}
\end{proposition}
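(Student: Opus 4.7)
The plan is a direct combinatorial partition of $\Fcyc_m$ according to the least-significant bit. Writing $i\in \Fcyc_m$ in its $m$-bit binary form $i_{m-1}\cdots i_1 i_0$, I would split $\Fcyc_m=A_0\sqcup A_1$ with $A_b:=\{i\in \Fcyc_m\mid i_0=b\}$, and compute the generating polynomials $\sum_{i\in A_0}X^i$ and $\sum_{i\in A_1}X^i$ separately. The target is to recognize the first as $f_{m-1}(X^2)+1$ and the second as $Xf_{m-3}(X^4)$; the seemingly ``wrong'' sign in the ``$+1$'' is precisely the point where characteristic $2$ is used.

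For $A_0$: if $i_0=0$ then the cyclic adjacencies involving bit $0$ (namely $(i_0,i_1)$ and $(i_{m-1},i_0)$) are automatic, so the remaining condition ``no two cyclically consecutive ones in the $m$-bit string with $i_0=0$'' reduces to ``no two consecutive ones in the $(m-1)$-bit string $i_{m-1}\cdots i_1$'' (the cyclic wrap-around through bit $0$ is blocked). Hence the map $i\mapsto i/2$ is a bijection $A_0\isom F_{m-1}\setminus\{0\}$, the point being that $\Fcyc_m$ excludes $0$. Therefore
\begin{equation*}
\sum_{i\in A_0}X^i=\sum_{k\in F_{m-1}\setminus\{0\}}X^{2k}=f_{m-1}(X^2)-1=f_{m-1}(X^2)+1.
\end{equation*}

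For $A_1$: if $i_0=1$ then the cyclic constraint forces $i_1=0$ and $i_{m-1}=0$, and the remaining freedom lies in the bits $i_2,\ldots,i_{m-2}$. Because $i_1=0$ and $i_{m-1}=0$ cap off both ends, the remaining constraint on these $m-3$ bits is just ``no two (ordinary) consecutive ones.'' Thus $i\mapsto (i-1)/4$ is a bijection $A_1\isom F_{m-3}$, yielding
\begin{equation*}
\sum_{i\in A_1}X^i=\sum_{k\in F_{m-3}}X^{4k+1}=X\cdot f_{m-3}(X^4).
\end{equation*}
Adding the two contributions gives exactly \eqref{eq:fcyc}.

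There is essentially no obstacle here: the whole content is a careful bookkeeping of the cyclic adjacency constraints, plus the observation that the excluded $k=0$ in the even case produces a $-1$ that becomes $+1$ over $\eftwo$. The hypothesis $m\geq 4$ is used only to ensure that the bits $i_2,\ldots,i_{m-2}$ in the $A_1$ analysis form a nonempty (or at worst singleton) range, so that $f_{m-3}$ is defined; the boundary case $m=4$, where $F_{m-3}=F_1=\{0,1\}$, fits the general argument without modification and can be cross-checked directly.
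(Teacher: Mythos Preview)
Your proof is correct and is essentially the same as the paper's: both decompose $\Fcyc_m$ according to the least-significant bit, identify the LSB-$0$ part with $\{2j\mid j\in F_{m-1}\}\setminus\{0\}$ and the LSB-$1$ part with $\{4j+1\mid j\in F_{m-3}\}$, and read off the corresponding generating polynomials. Your write-up is slightly more explicit (spelling out which adjacency constraints become vacuous and why the excluded $0$ produces the ``$+1$'' in characteristic~$2$), but the argument is the same.
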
  

\begin{proof}
Write 
\begin{equation}\label{eq:decomp}
\Fcyc_{m}=A_m\cup B_m,
\end{equation}
where $A_m$, $B_m$ are the subsets consisting of integers whose binary
representation has LSB $0$, $1$, respectively. Then we claim that 
\begin{equation}\label{eq:A}
A_m=\{2j|j\in F_{m-1}\}\smallsetminus\{0\}
\end{equation}
and
\begin{equation}\label{eq:B}
B_m=\{4j+1|j\in F_{m-3}\}.
\end{equation}
To see this, note that the binary representation of the numbers in
$B_m$ must have $2$nd LSB and MSB of $0$ (because the LSB is $1$), and
then the part of the binary representation obtained by omitting the
two LSBs and the MSB can be chosen freely from $F_{m-3}$. A similar
argument justifies also \eqref{eq:A}. Now the assertion follows at
once from \eqref{eq:decomp}--\eqref{eq:B}. 
\end{proof}

Let us now turn to  some recursions for the $f_m$.
\begin{proposition}\label{prop:recur}
It holds that $f_1(X)=1+X$, $f_2(X)=1+X+X^2$. Also, for $m\geq 3$, it
holds that  
\begin{equation}\label{eq:fm1}
f_m=f_{m-1}(X^2)+Xf_{m-2}(X^4),
\end{equation}
alternatively,
\begin{equation}\label{eq:fm2}
f_m=f_{m-1}(X)+X^{2^{m-1}}f_{m-2}(X).
\end{equation}
In addition,
\begin{equation}\label{eq:fm3}
1+f_m=Xf_{m-1}f_{m-2}^2.
\end{equation}

Finally, let $N_m:=|F_m|$ be the the number of monomials appearing in
$f_m$. Then $N_1=2$, $N_2=3$, and for $m\geq 3$,
$N_m=N_{m-1}+N_{m-2}$. Hence  
\begin{equation}\label{eq:fm1z}
f_m(1)=0\quad \iff \quad 3|(m-1).
\end{equation}
\end{proposition}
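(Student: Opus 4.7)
The plan is to handle the statements in order, leveraging earlier parts to get the later ones.

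First I would verify the base cases directly from the definition: $F_1=\{0,1\}$ and $F_2=\{0,1,2\}$ (since $3=11_2$ has two consecutive ones), giving $f_1=1+X$ and $f_2=1+X+X^2$. Next, for \eqref{eq:fm1}, I would classify elements of $F_m$ by their least significant bit: an element with LSB $0$ has the form $2j$ with $j\in F_{m-1}$, while an element with LSB $1$ must have second LSB $0$, so it has the form $4j+1$ with $j\in F_{m-2}$. This gives the disjoint union $F_m=\{2j:j\in F_{m-1}\}\sqcup\{4j+1:j\in F_{m-2}\}$, which, after summing $X^i$ over both sets and using $f_{m-1}(X^2)=f_{m-1}(X)^2$ only at the end, yields \eqref{eq:fm1}. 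For \eqref{eq:fm2}, I would instead classify by the most significant bit: MSB $=0$ contributes $F_{m-1}$, MSB $=1$ forces the next bit to be $0$, giving $\{2^{m-1}+j:j\in F_{m-2}\}$.

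For \eqref{eq:fm3}, I would argue by induction on $m$, with the base case $m=2$ being $1+f_2=X+X^2=Xf_1\cdot 1$, where $f_0=1$ (since $F_0=\{0\}$). For the induction step, using \eqref{eq:fm1} in the form $f_m=f_{m-1}^2+Xf_{m-2}^4$ (since we are in characteristic $2$), I would write
\[
1+f_m=(1+f_{m-1})^2+Xf_{m-2}^4.
\]
The inductive hypothesis $1+f_{m-1}=Xf_{m-2}f_{m-3}^2$ then gives
\[
1+f_m=X^2f_{m-2}^2f_{m-3}^4+Xf_{m-2}^4=Xf_{m-2}^2\bigl(f_{m-2}^2+Xf_{m-3}^4\bigr),
\]
and the factor in parentheses is exactly $f_{m-2}(X^2)+Xf_{m-3}(X^4)=f_{m-1}$ by a second application of \eqref{eq:fm1}, completing the induction.

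For the count, the disjoint decomposition from the proof of \eqref{eq:fm1} immediately gives $N_m=N_{m-1}+N_{m-2}$, with $N_1=2$ and $N_2=3$. Finally, since $f_m(1)\equiv N_m\pmod 2$, and reducing the recursion modulo $2$ produces the period-$3$ pattern $(N_m\bmod 2)_{m\geq 1}=0,1,1,0,1,1,\dots$, one gets $N_m\equiv 0\pmod 2$ if and only if $m\equiv 1\pmod 3$, which is \eqref{eq:fm1z}. I do not anticipate a real obstacle here; the only step that requires any thought is \eqref{eq:fm3}, and that falls out cleanly once one has \eqref{eq:fm1} in both of its forms $f_m=f_{m-1}(X^2)+Xf_{m-2}(X^4)$ and $f_m=f_{m-1}^2+Xf_{m-2}^4$.
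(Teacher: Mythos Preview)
Your proposal is correct and follows essentially the same approach as the paper: the LSB/MSB decompositions for \eqref{eq:fm1} and \eqref{eq:fm2}, and the induction for \eqref{eq:fm3} via $1+f_m=(1+f_{m-1})^2+Xf_{m-2}^4$ followed by the induction hypothesis and another application of \eqref{eq:fm1}. The only cosmetic difference is that the paper anchors the induction at $m=3$ (verified directly) rather than at $m=2$ with the auxiliary convention $f_0=1$; just note that your induction step for $m=3$ also uses \eqref{eq:fm1} at level $m-1=2$, which is outside the stated range, so you should either verify that extension explicitly or, as the paper does, take $m=3$ as the base case.
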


\begin{proof}
First, \eqref{eq:fm1} follows from decomposing $F_m$ as the union
of the subset of integers having LSB $0$, and the subset having LSB
$1$, and hence two LSBs $01$ (note the similarity to the power
series decomposition in \cite[p.~756]{Arndt}). Also, \eqref{eq:fm2}
follows similarly by considering MSBs instead of LSBs. 

To prove \eqref{eq:fm3}, we use induction. For $m=3$, the equation can
be verified directly. Assume, therefore,
that \eqref{eq:fm3} holds for some $m\geq 3$. Using \eqref{eq:fm1}, we
get
\begin{align*}
1+f_{m+1} &= 1+f_m^2+Xf_{m-1}^4\\ 
 &= X^2f_{m-1}^2f_{m-2}^4+Xf_{m-1}^4 \\
 &\quad \text{ (induction hypothesis)}\\
 &= Xf_{m-1}^2(f_{m-1}^2+Xf_{m-2}^4)\\
 &= Xf_mf_{m-1}^2,
\end{align*}
as required.

Finally, the last assertion follows from either of
\eqref{eq:fm1} or \eqref{eq:fm2}.
\end{proof}

\begin{remarkx}
{\rm
The above Fibonacci recursion for the $N_m$ is a well known property
of the sets $F_m$ of fibbinary integers; it is, in fact, the reason for
the name \cite{SFB}.
}
\end{remarkx}

\begin{corollary}\label{coro:divides}
For $m\geq 4$, it holds that $\fcyc_m=Xu_mf_{m-3}^4$ for some
polynomial $u_m(X)\in \eftwo[X]$ with
\begin{equation}\label{eq:degum}
\deg(u_m)=\begin{cases}
\frac{1}{3}(2^m+1) & \text{if $m$ is odd}\\
\frac{1}{3}(2^m-1) & \text{if $m$ is even}.
\end{cases}
\end{equation}
Hence, for both even and odd $m$, $\deg(u_m)=d_B$, where $d_B$ is the
Bose distance found in \cite{DDZ15}. Moreover, if $f_{m-3}(1)=0$, then
$u_m(1)=0$.
\end{corollary}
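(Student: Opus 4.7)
The plan is to derive the claimed factorization by combining Proposition \ref{prop:cycto} with identity \eqref{eq:fm3} of Proposition \ref{prop:recur}, exploiting the characteristic-$2$ identity $g(X^{2^i}) = g(X)^{2^i}$ repeatedly.

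First I would apply \eqref{eq:fm3} with $m$ replaced by $m-1$ to get $1 + f_{m-1} = X f_{m-2} f_{m-3}^2$, then substitute $X^2$ for $X$ throughout. Since squaring is an $\eftwo[X]$-algebra homomorphism, this yields
$$
1 + f_{m-1}(X^2) = X^2\, f_{m-2}(X^2)\, f_{m-3}(X^4).
$$
Plugging this into \eqref{eq:fcyc} gives
$$
\fcyc_m(X) = X f_{m-3}(X^4) + X^2 f_{m-2}(X^2) f_{m-3}(X^4) = X f_{m-3}(X^4)\bigl(1 + X f_{m-2}(X^2)\bigr).
$$
Using once more that $f_{m-3}(X^4) = f_{m-3}(X)^4$ in characteristic $2$, I would set $u_m(X) := 1 + X f_{m-2}(X^2)$, obtaining the desired factorization $\fcyc_m = X u_m f_{m-3}^4$.

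For the degree computation, I would just read off $\deg(u_m) = 1 + 2\deg(f_{m-2})$ from its definition and plug in \eqref{eq:maxfm} of Proposition \ref{prop:max} applied to $f_{m-2}$: when $m$ is odd, $m-2$ is odd and $\deg(f_{m-2}) = \frac{2}{3}(2^{m-2}+1) - 1$, giving $\deg(u_m) = \frac{1}{3}(2^m+1)$; when $m$ is even, $m-2$ is even and $\deg(f_{m-2}) = \frac{2}{3}(2^{m-2}-1)$, giving $\deg(u_m) = \frac{1}{3}(2^m-1)$. Matching these to the values of $d_B$ in Theorem \ref{thm:main} finishes this part.

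The last assertion is where a small parity check is needed. Observe that $u_m(1) = 1 + f_{m-2}(1)$, and by \eqref{eq:fm1z} the condition $f_{m-3}(1) = 0$ is equivalent to $m \equiv 1 \pmod 3$, while $f_{m-2}(1) = 0$ is equivalent to $m \equiv 0 \pmod 3$. Since these congruence classes are disjoint, whenever $f_{m-3}(1) = 0$ one automatically has $f_{m-2}(1) = 1$ (the only other possible value in $\eftwo$), whence $u_m(1) = 1 + 1 = 0$. This step is really the only subtle point; the rest is a direct algebraic manipulation, so I expect no serious obstacle, merely the bookkeeping of keeping track of which residue class mod $3$ controls which polynomial's value at $1$.
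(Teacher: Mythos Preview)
Your proof is correct and follows essentially the same route as the paper: combine \eqref{eq:fcyc} with \eqref{eq:fm3} (shifted to $m-1$) to extract the factor $Xf_{m-3}^4$ and identify $u_m=1+Xf_{m-2}^2$, then use \eqref{eq:fm1z} for the parity check at $X=1$. The only cosmetic difference is that you compute $\deg(u_m)$ directly from the explicit formula rather than by subtracting $\deg(Xf_{m-3}^4)$ from $\deg(\fcyc_m)$.
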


\begin{proof}
By \eqref{eq:fcyc}, it holds that
$\fcyc_m=Xf_{m-3}^4+(1+f_{m-1})^2$, while by \eqref{eq:fm3},
$1+f_{m-1}=Xf_{m-2}f_{m-3}^2$. It follows that $\fcyc_m$ is
divisible by $Xf_{m-3}^4$. Now \eqref{eq:degum} follows from
subsisting the degrees of $\fcyc_m$ and $f_{m-3}$ from
\eqref{eq:maxfm} and \eqref{eq:maxcfm}.

For the last assertion, by  substituting \eqref{eq:fm3} in
\eqref{eq:fcyc} as mentioned above, it can be verified that
$u_m=1+Xf_{m-2}^2$, so that $u_m(1)=0$ iff $f_{m-2}(1)=1$. But by
\eqref{eq:fm1z}, this must hold if $f_{m-3}(1)=0$.
\end{proof}

\subsection{The splitting field of the fibbinary polynomials}\label{sec:split}

\begin{proposition}\label{prop:split}
For all integer $m\geq 3$, $f_m$ divides
$X^{2^{m+2}-1}+1$. Specifically, it holds that
\begin{equation}\label{eq:split}
X^{2^{m+2}-1}+1=(1+X)f_m^2+Xf_{m-1}^2f_m+X^3f_{m-1}^4f_m^4.
\end{equation}
Hence for all $m\geq 3$, the splitting field of $f_m$ is
$\eftwomm{m+2}$. 
\end{proposition}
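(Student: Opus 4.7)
The plan is to prove the polynomial identity \eqref{eq:split} by induction on $m$, then deduce both the divisibility and the splitting-field characterization from it. The base case $m=3$ reduces to a direct expansion with $f_2 = 1+X+X^2$ and $f_3 = 1+X+X^2+X^4+X^5$, verifying $(1+X)f_3^2 + Xf_2^2 f_3 + X^3 f_2^4 f_3^4 = X^{31}+1$. For the inductive step, set
$$P_m := (1+X)f_m^2 + X f_{m-1}^2 f_m + X^3 f_{m-1}^4 f_m^4,$$
and note that in characteristic $2$,
$$X\bigl(X^{2^{m+2}-1}+1\bigr)^2 + (1+X) = X^{2^{m+3}-1}+1.$$
So it suffices to prove the \emph{unconditional} polynomial identity $P_{m+1} + XP_m^2 = 1+X$ for all $m\geq 3$; the induction hypothesis then slots in automatically.

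Verifying this unconditional identity is the main obstacle of the proof, and the only place where the recursions of Proposition \ref{prop:recur} genuinely interact. The plan is to expand $P_{m+1}$ using $f_{m+1} = f_m^2 + Xf_{m-1}^4$ from \eqref{eq:fm1}, and expand $XP_m^2$ by termwise squaring; the result is a polynomial in $f_m, f_{m-1}, f_{m-2}$ that collapses after two pairings. First, the ``medium'' contribution $(X^2+X^3)(f_{m-1}^8 + f_m^2 f_{m-1}^4) = (X^2+X^3)f_{m-1}^4(f_{m-1}^4+f_m^2)$ equals $(X^4+X^5)f_{m-1}^4 f_{m-2}^8$ by the square of \eqref{eq:fm1}, and the fourth power of \eqref{eq:fm3} rewrites $X^4 f_{m-1}^4 f_{m-2}^8$ as $1+f_m^4$, so this group produces $(1+X)(1+f_m^4)$, giving the desired $1+X$ plus an $f_m^4$-tail that cancels against the other $f_m^4$ contributions. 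Second, the leftover high-degree pair $X^3 f_m^4(1+f_m^8) + X^7 f_m^4 f_{m-1}^8(f_{m-1}^8+f_m^4)$ has both summands equal to $X^{11} f_m^4 f_{m-1}^8 f_{m-2}^{16}$ --- the first by the eighth power of \eqref{eq:fm3}, the second by the square of \eqref{eq:fm1} --- so they cancel in characteristic $2$.

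Once \eqref{eq:split} is established, every summand on its right-hand side carries $f_m$ as a factor, giving the divisibility $f_m \mid X^{2^{m+2}-1}+1$. Since $X^{2^{m+2}-1}+1 = \prod_{\beta \in \eftwomm{m+2}^*}(X-\beta)$ is separable and splits in $\eftwomm{m+2}$, so does $f_m$; hence its splitting field is $\eftwomm{k}$ for some $k \mid m+2$. To conclude $k = m+2$, suppose $k < m+2$: then $k \leq (m+2)/2$, and since all roots of the separable $f_m$ lie in $\eftwomm{k}^*$, this forces $\deg f_m \leq 2^{(m+2)/2} - 1$. But Proposition \ref{prop:max} gives $\deg f_m \geq (2^{m+1}-2)/3$, which exceeds $2^{(m+2)/2}-1$ for every $m \geq 3$ (direct check for $m \in \{3,4\}$; the gap widens geometrically thereafter), a contradiction. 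Hence $k = m+2$.
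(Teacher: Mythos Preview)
Your proof is correct and follows essentially the same approach as the paper: induction on $m$ for \eqref{eq:split}, with the inductive step reduced to the polynomial identity $P_{m+1}+XP_m^2=1+X$ (which is exactly the paper's $X(t_{m-1}+1)^2+1=t_m$, reindexed), verified using the recursions \eqref{eq:fm1} and \eqref{eq:fm3}; the splitting-field deduction via the degree bound is likewise identical. One cosmetic slip: in your final cancellation, the relation $f_{m-1}^8+f_m^4=X^4 f_{m-2}^{16}$ is the \emph{fourth} power of \eqref{eq:fm1}, not the square.
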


\begin{proof}
We will prove \eqref{eq:split} by induction. For the basis, one can
verify \eqref{eq:split} directly. 

To continue, let us write $t_m$ for the right-hand side of
\eqref{eq:split}. Assume now that $m\geq 4$, and \eqref{eq:split}
holds for $m-1$, that is, $t_{m-1}=X^{2^{m+1}-1}+1$. Since 
$$
X^{2^{m+2}-1}+1=X\big((X^{2^{m+1}-1}+1)+1 \big)^2+1,
$$
it is sufficient to prove that
\begin{equation}\label{eq:mid}
X(t_{m-1}+1)^2+1=t_m.
\end{equation}
Now,
\begin{align*}
X(t_{m-1}+1)^2+1 & =
X(1+X)^2f_{m-1}^4+\underbrace{X^3f_{m-2}^4f_{m-1}^2}_{X(Xf_{m-1}f_{m-2}^2)^2}
\\ &\quad +
\underbrace{X^7f_{m-2}^8f_{m-1}^8}_{X^3f_{m-1}^4(Xf_{m-1}f_{m-2}^2)^4}
+X+1 \\
&= X(1+X)^2f_{m-1}^4+X(1+f_m^2)\\
&\quad +X^3f_{m-1}^4(1+f_m^4)+X+1\\
&\quad \text{ (by \refeq{eq:fm3})} \\
&=
Xf_{m-1}^4+X^3f_{m-1}^4+X+Xf_{m}^2\\
&\quad +X^3f_{m-1}^4+X^3f_{m-1}^4f_m^4+X+1\\
&=1+Xf_{m-1}^4+Xf_m^2+X^3f_{m-1}^4f_m^4.
\end{align*}
The last expression is equal to $t_m$ iff 
$$
1+Xf_{m-1}^4+Xf_m^2=(1+X)f_m^2+Xf_{m-1}^2f_m, 
$$
that is, iff $f_m^2=1+Xf_{m-1}^2(f_m+f_{m-1}^2)$. However, by
\eqref{eq:fm1}, the right-hand side of the last expression equals
$1+X^2f_{m-1}^2f_{m-2}^4$, which, in turn, equals $f_m^2$, by
\eqref{eq:fm3}. This completes the proof of \eqref{eq:split}. 

Hence the splitting field of $f_m$ is a subfield of
$\eftwomm{m+2}$, and $f_m$ is separable. Considering the degree of
$f_m$ from \eqref{eq:maxfm}, the only possible subfield of
$\eftwomm{m+2}$ that can contain all roots is $\eftwomm{m+2}$ itself. 
\end{proof}

\subsection{Completing the proof of Theorem \ref{thm:main}}\label{sec:end}
\begin{proof}[Proof of Theorem \ref{thm:main}]
By Proposition \ref{prop:wtroots}, the weight $w$ of the generating
idempotent is equal to the number of roots of $\fcyc_m$ in
$\eftwom^*$. By Corollary \ref{coro:divides}, $\fcyc_m=Xu_mf_{m-3}^4$,
and by Proposition \ref{prop:split}, all the roots of $f_{m-3}$ are in
$\eftwomm{m-1}$, whose intersection with $\eftwom$ is $\eftwo$. So, the
only root possibly contributed by $f_{m-3}$ in $\eftwom^*$ is $1$, but
by Corollary \ref{coro:divides}, if $1$ is a root of $f_{m-3}$, then
it is already a root of $u_m$.

Hence the total number of roots is at most $\deg(u_m)=d_B$, but also
the number of roots is at least $d_B$, as it is the weight of the
idempotent. 
\end{proof}

\section{Concluding remarks and open questions} \label{sec:conc}
It follows from the proof of Theorem \ref{thm:main} that the
polynomial $u_m(X)$ from the proof of Corollary \ref{coro:divides}
is separable, splits in $\eftwom$, and its roots are $\alpha^i$
exactly for those $i$ for which $X^i$ appears in $e(X)$. Hence
$u_m(X)$ is nothing but the {\it locator polynomial} of the generating
idempotent. Note, however, that in most references, including
\cite{AS94}, the term ``locator polynomial'' refers to the
``reversed'' polynomial $X^{\deg(u_m)}u_m(X^{-1})$, having
$1/\alpha^i$ for $i$ as above as its roots. 

From the proof of Corollary \ref{coro:divides} it follows that
$u_m=1+Xf_{m-2}^2$, so that $u_m$ has only
odd powers of $X$, apart from the free coefficient. Considering that
$u_m$ is reversed with respect to the definition of the locator
polynomial in \cite{AS94}, this agrees with \cite[Theorem 2]{AS94}. 
However, without the above analysis, it is not automatically clear
that $u_m=1+Xf_{m-2}^2$ splits in $\eftwom$, and that it is the
locator of the \emph{generating} idempotent. It would be interesting
to see if the methods of \cite{AS94} can provide an alternative proof
for Theorem \ref{thm:main}.

Finally, we remark that the above analysis also recovers the dimension
of $\bch(m,2^{m-2}+1)$ from \cite[Corollary 10]{DDZ15}. The dimension
is the number of non-zeros of $e(X)$, which equals
$1+|\bar{J}_m|=1+|\Fcyc_m|$, as shown in the proof of Proposition
\ref{prop:wtroots}. From Proposition \ref{prop:cycto}, it can be
verified that $1+|\Fcyc_m|=|F_{m-3}|+|F_{m-1}|$, and as $|F_j|$ is
the $(j+2)$-th Fibonacci number (Proposition \ref{prop:recur}),
Binet's formula for the Fibonacci numbers can be used to recover
\cite[Corollary 10]{DDZ15}; we omit the details.

We conclude the paper with a discussion on
potential generalizations of the current results for determining the
weight of the generating idempotent for other classes of BCH codes.
Consider the case of designed distance
$\delta_{j,m}:=(2^{j-1}-1)2^{m-j}+1=2^{m-1}-2^{m-j}+1$ (where $j\in
\{2,\ldots,m-1\}$), so that $\delta_{2,m}=2^{m-2}+1$ is 
the case considered above, $\delta_{3,m}=3\cdot 2^{m-3}+1$, 
$\delta_{4,m}=7\cdot 2^{m-4}+1$, etc..\footnote{These values of
$\delta_{m,j}$ were chosen because it was observed that at least in
some specific cases, the resulting Bose distance is considerably
larger than the designed distance.}

Let us consider in some 
detail the first new case, that is, the case $j=3$ (with $m\geq 4$),
for which we have $\delta_{3,m}-2=2^{m-1}-2^{m-3}-1$, with $m$-bit
binary representation $01011\cdots 1$. As in the proof of Proposition
\ref{prop:wtroots}, the set of zeros of the code is $\{\alpha^i|i\in
J_{3,m}\}$, where $J_{3,m}$ is the set of integers $i$ in 
$\{1,\ldots,n-1\}$ such that at least one of $000, 001, 010$ appears
on 3 cyclically consecutive indices of the $m$-bit binary representation of
$i$. Hence, the set of non-zeros is $\{0\}\cup \bar{J}_{3,m}$, 
where $\bar{J}_{3,m}$ is the set of integers $i$ in
$\{1,\ldots,n-1\}$ such that neither of $000, 001, 010$ appears on any
3 cyclically consecutive indices of the $m$-bit binary representation of
$i$. This is the same 
as demanding that in any $3$ cyclically consecutive indices, there are
at least two $1$'s.\footnote{Since if $100$ appears on some 3
cyclically consecutive indices, then so does $00b$ for some
$b\in\{0,1\}$. In either case, one of forbidden triples $001$ or $000$
also appears on some 3 cyclically consecutive indices.} 

Moving to inverses, as in the proof of Proposition \ref{prop:wtroots}
(i.e., replacing $i$ with $2^m-1-i$), we conclude that the weight
of the generating idempotent is the number of roots of
$\fcyc_{3,m}(X):=\sum_{i\in \Fcyc_{3,m}} X^i$, where $\Fcyc_{3,m}$ is
the set of integers $i$ in $\{1,\ldots,n-1\}$ such that in every $3$
cyclically consecutive indices of the $m$-bit binary representation,
there is at most a single $1$. As in Section \ref{sec:nroots}, it is
interesting to consider also the corresponding non-cyclic set
$F_{3,m}$, which is the set of integers $i$ in $\{0,\ldots,n-1\}$ with
no more than one $1$-bit in each bit triple, and related polynomials
$f_{3,m}(X):=\sum_{i\in F_{3,m}} X^i$. 

We note that the numbers in
$F_{3,m}$ appear in the OEIS as Sequence A048715 \cite{SFB2}. As in
Proposition \ref{prop:cycto}, it can be verified that for $m\geq 6$,  
\begin{multline*}
\fcyc_{3,m}(X)=1+f_{3,m-2}(X^4) + \\ Xf_{3,m-5}(X^8) + 
X^2f_{3,m-5}(X^{16}). 
\end{multline*}
Also, as in Proposition \ref{prop:recur}, it can be verified that
for $m\geq 6$, it holds that
$f_{3,m}(X)=f_{3,m-1}(X^2)+Xf_{3,m-3}(X^8)$ (and
$f_{3,3}(X),f_{3,4}(X),f_{3,5}(X)$ can be found directly). 
Although examining some small values of $m$ suggests that in general
the generating idempotent is not a low-weight codeword in this case,
it is still interesting to see if its weight can be pinned down by
using the above recursions for identifying  appropriate factors of
$\fcyc_{3,m}$, as done above for the case $j=2$.

Finally, it would be interesting to see if any of the current
results can be generalized to primitive, narrow-sense BCH codes over
$\efq$ for a general prime power $q$, as in \cite[Conjecture
2]{DDZ15}, at least for a designed distance of $q^{m-2}+1$.

\section*{Acknowledgment}
The authors are grateful to Ariel Doubchak for many helpful
comments on an earlier draft.

\bibliographystyle{IEEETrans}
	\bibliography{bibliography}

\begin{IEEEbiographynophoto}{Yaron Shany}
received the Ph.D.~degree in Electrical Engineering from Tel Aviv
University in 2004.  He is currently with the Advanced Flash
Solution Lab of Samsung Semiconductor Israel R\&D Center. His research
interests include coding theory and its applications to storage
systems and devices. 
\end{IEEEbiographynophoto}

\begin{IEEEbiographynophoto}{Amit Berman}
(Senior Member, IEEE) received the Ph.D. degree in
electrical engineering from the Technion---Israel Institute of
Technology, Haifa, Israel, in 2013. He is currently with the Advanced
Flash Solution Laboratory, Samsung Electronics Memory Division, as the
Vice President of Research and Development. He has authored over 30
research articles and holds over 100 U.S./Global issued and pending
patents. He was a recipient of several awards, including the Samsung
Contribution Award, the Hershel Rich Innovation Award, the Mitchell
Grant, the HPI Fellowship, and the International Solid-State Circuits
Conference Recognition. 
\end{IEEEbiographynophoto}

\end{document}